\begin{document}

\title{Millimeter Wave Ad~Hoc Networks: \\Noise-limited or Interference-limited?\vskip -0.5cm}

\author{\IEEEauthorblockN{Hossein Shokri-Ghadikolaei and Carlo Fischione}
\IEEEauthorblockA{Automatic Control Department, Electrical Engineering and ACCESS \\
KTH Royal Institute of Technology, 10044, Stockholm, Sweden \\
{emails: \{hshokri, carlofi\}@kth.se}}}


\newtheorem{defin}{Definition}
\newtheorem{theorem}{Theorem}
\newtheorem{prop}{Proposition}
\newtheorem{lemma}{Lemma}
\newtheorem{alg}{Algorithm}
\newtheorem{remark}{Remark}
\newtheorem{example}{Example}
\newtheorem{notations}{Notations}
\newtheorem{assumption}{Assumption}

\newcommand{\be}{\begin{equation}}
\newcommand{\ee}{\end{equation}}
\newcommand{\ba}{\begin{array}}
\newcommand{\ea}{\end{array}}
\newcommand{\bea}{\begin{eqnarray}}
\newcommand{\eea}{\end{eqnarray}}
\newcommand{\combin}[2]{\ensuremath{ \left( \ba{c} #1 \\ #2 \ea \right) }}
\newcommand{\diag}{{\mbox{diag}}}
\newcommand{\rank}{{\mbox{rank}}}
\newcommand{\dom}{{\mbox{dom{\color{white!100!black}.}}}}
\newcommand{\range}{{\mbox{range{\color{white!100!black}.}}}}
\newcommand{\image}{{\mbox{image{\color{white!100!black}.}}}}
\newcommand{\herm}{^{\mbox{\scriptsize H}}}  
\newcommand{\sherm}{^{\mbox{\tiny H}}}       
\newcommand{\tran}{^{\mbox{\scriptsize T}}}  
\newcommand{\tranIn}{^{\mbox{-\scriptsize T}}}  
\newcommand{\card}{{\mbox{\textbf{card}}}}
\newcommand{\asign}{{\mbox{$\colon\hspace{-2mm}=\hspace{1mm}$}}}
\newcommand{\ssum}[1]{\mathop{ \textstyle{\sum}}_{#1}}

\newcommand{\vbar}{\raisebox{.17ex}{\rule{.04em}{1.35ex}}}
\newcommand{\vbarind}{\raisebox{.01ex}{\rule{.04em}{1.1ex}}}
\newcommand{\D}{\ifmmode {\rm I}\hspace{-.2em}{\rm D} \else ${\rm I}\hspace{-.2em}{\rm D}$ \fi}
\newcommand{\T}{\ifmmode {\rm I}\hspace{-.2em}{\rm T} \else ${\rm I}\hspace{-.2em}{\rm T}$ \fi}
\newcommand{\B}{\ifmmode {\rm I}\hspace{-.2em}{\rm B} \else \mbox{${\rm I}\hspace{-.2em}{\rm B}$} \fi}
\newcommand{\Hil}{\ifmmode {\rm I}\hspace{-.2em}{\rm H} \else \mbox{${\rm I}\hspace{-.2em}{\rm H}$} \fi}
\newcommand{\C}{\ifmmode \hspace{.2em}\vbar\hspace{-.31em}{\rm C} \else \mbox{$\hspace{.2em}\vbar\hspace{-.31em}{\rm C}$} \fi}
\newcommand{\Cind}{\ifmmode \hspace{.2em}\vbarind\hspace{-.25em}{\rm C} \else \mbox{$\hspace{.2em}\vbarind\hspace{-.25em}{\rm C}$} \fi}
\newcommand{\Q}{\ifmmode \hspace{.2em}\vbar\hspace{-.31em}{\rm Q} \else \mbox{$\hspace{.2em}\vbar\hspace{-.31em}{\rm Q}$} \fi}
\newcommand{\Z}{\ifmmode {\rm Z}\hspace{-.28em}{\rm Z} \else ${\rm Z}\hspace{-.38em}{\rm Z}$ \fi}

\newcommand{\sgn}{\mbox {sgn}}
\newcommand{\var}{\mbox {var}}
\newcommand{\E}{\mbox {E}}
\newcommand{\cov}{\mbox {cov}}
\renewcommand{\Re}{\mbox {Re}}
\renewcommand{\Im}{\mbox {Im}}
\newcommand{\cum}{\mbox {cum}}

\renewcommand{\vec}[1]{{\bf{#1}}}     
\newcommand{\vecsc}[1]{\mbox {\boldmath \scriptsize $#1$}}     
\newcommand{\itvec}[1]{\mbox {\boldmath $#1$}}
\newcommand{\itvecsc}[1]{\mbox {\boldmath $\scriptstyle #1$}}
\newcommand{\gvec}[1]{\mbox{\boldmath $#1$}}

\newcommand{\balpha}{\mbox {\boldmath $\alpha$}}
\newcommand{\bbeta}{\mbox {\boldmath $\beta$}}
\newcommand{\bgamma}{\mbox {\boldmath $\gamma$}}
\newcommand{\bdelta}{\mbox {\boldmath $\delta$}}
\newcommand{\bepsilon}{\mbox {\boldmath $\epsilon$}}
\newcommand{\bvarepsilon}{\mbox {\boldmath $\varepsilon$}}
\newcommand{\bzeta}{\mbox {\boldmath $\zeta$}}
\newcommand{\boldeta}{\mbox {\boldmath $\eta$}}
\newcommand{\btheta}{\mbox {\boldmath $\theta$}}
\newcommand{\bvartheta}{\mbox {\boldmath $\vartheta$}}
\newcommand{\biota}{\mbox {\boldmath $\iota$}}
\newcommand{\blambda}{\mbox {\boldmath $\lambda$}}
\newcommand{\bmu}{\mbox {\boldmath $\mu$}}
\newcommand{\bnu}{\mbox {\boldmath $\nu$}}
\newcommand{\bxi}{\mbox {\boldmath $\xi$}}
\newcommand{\bpi}{\mbox {\boldmath $\pi$}}
\newcommand{\bvarpi}{\mbox {\boldmath $\varpi$}}
\newcommand{\brho}{\mbox {\boldmath $\rho$}}
\newcommand{\bvarrho}{\mbox {\boldmath $\varrho$}}
\newcommand{\bsigma}{\mbox {\boldmath $\sigma$}}
\newcommand{\bvarsigma}{\mbox {\boldmath $\varsigma$}}
\newcommand{\btau}{\mbox {\boldmath $\tau$}}
\newcommand{\bupsilon}{\mbox {\boldmath $\upsilon$}}
\newcommand{\bphi}{\mbox {\boldmath $\phi$}}
\newcommand{\bvarphi}{\mbox {\boldmath $\varphi$}}
\newcommand{\bchi}{\mbox {\boldmath $\chi$}}
\newcommand{\bpsi}{\mbox {\boldmath $\psi$}}
\newcommand{\bomega}{\mbox {\boldmath $\omega$}}

\newcommand{\bolda}{\mbox {\boldmath $a$}}
\newcommand{\bb}{\mbox {\boldmath $b$}}
\newcommand{\bc}{\mbox {\boldmath $c$}}
\newcommand{\bd}{\mbox {\boldmath $d$}}
\newcommand{\bolde}{\mbox {\boldmath $e$}}
\newcommand{\boldf}{\mbox {\boldmath $f$}}
\newcommand{\bg}{\mbox {\boldmath $g$}}
\newcommand{\bh}{\mbox {\boldmath $h$}}
\newcommand{\bp}{\mbox {\boldmath $p$}}
\newcommand{\bq}{\mbox {\boldmath $q$}}
\newcommand{\br}{\mbox {\boldmath $r$}}
\newcommand{\bs}{\mbox {\boldmath $s$}}
\newcommand{\bt}{\mbox {\boldmath $t$}}
\newcommand{\bu}{\mbox {\boldmath $u$}}
\newcommand{\bv}{\mbox {\boldmath $v$}}
\newcommand{\bw}{\mbox {\boldmath $w$}}
\newcommand{\bx}{\mbox {\boldmath $x$}}
\newcommand{\by}{\mbox {\boldmath $y$}}
\newcommand{\bz}{\mbox {\boldmath $z$}}

\newenvironment{Ex}
{\begin{adjustwidth}{0.04\linewidth}{0cm}
\begingroup\small
\vspace{-1.0em}
\raisebox{-.2em}{\rule{\linewidth}{0.3pt}}
\begin{example}
}
{
\end{example}
\vspace{-5mm}
\rule{\linewidth}{0.3pt}
\endgroup
\end{adjustwidth}}


\maketitle

\begin{abstract}
In millimeter wave (mmWave) communication systems, narrow beam operations overcome severe channel attenuations, reduce multiuser interference, and thus introduce the new concept of noise-limited mmWave wireless networks. The regime of the network, whether noise-limited or interference-limited, heavily reflects on the medium access control (MAC) layer throughput and on proper resource allocation and interference management strategies. Yet, alternating presence of these regimes and, more importantly, their dependence on the mmWave design parameters are ignored in the current approaches to mmWave MAC layer design, with the potential disastrous consequences on the throughput/delay performance. In this paper, tractable closed-form expressions for collision probability and MAC layer throughput of mmWave networks, operating under slotted ALOHA and TDMA, are derived. The new analysis reveals that mmWave networks may exhibit a non-negligible transitional behavior from a noise-limited regime to an interference-limited regime, depending on the density of the transmitters, density and size of obstacles, transmission probability, beamwidth, and transmit power. It is concluded that a new framework of adaptive hybrid resource allocation procedure, containing a proactive contention-based phase followed by a reactive contention-free one with dynamic phase duration, is necessary to cope with such transitional behavior.
\end{abstract}

\begin{keywords}
5G, millimeter wave communications, collision analysis, hybrid MAC.
\end{keywords}

\section{Introduction}\label{sec: introductions}
Increased demands for higher data rates and limited available spectrum for wireless systems below 6~GHz motivate the use of millimeter wave (mmWave) communications to support multi-gigabit data rates. This interest has led to several standards for indoor wireless personal area networks (WPANs) or wireless local area networks (WLANs) such as IEEE~802.15.3c~\cite{802_15_3c}, IEEE~802.11ad~\cite{802_11ad}.

A mmWave communication has short wavelength, large bandwidth, and high attenuation through most obstacles~\cite{Rangan2014Millimeter,Rappaport2015wideband,shokri2015mmWavecellular}. Very small wavelengths allow the implementation of a large number of antenna elements in the current size of radio chips, which promises substantial antenna gains using narrow beams both at the transmitter and at the receiver. Besides boosting the link budget, these pencil beams will reduce the multiuser interference in the network. In the extreme case, once multiuser interference does no longer limit the network throughput, we face a noise-limited network where the achievable throughput is limited only by the noise power at the receiver. An interesting question is whether a mmWave network with pencil-beam operation is noise-limited, as opposed to conventional interference-limited wireless networks.
The answer of this question reveals the required complexity of different medium access control (MAC) layer functions. As the system goes to the noise-limited regime, design of proper resource allocation and interference management mechanisms is substantially simplified~\cite{shokri2015mmWavecellular,Shokri2015mmWaveWPAN,Singh2011Interference,Shokri2015Beam}. For instance, a very simple resource allocation such as activating all links at the same time without any coordinations among different links may outperform a complicated independent-set based resource allocation in a noise-limited regime~\cite{Shokri2015Beam}. Instead, pencil-beam operation complicates negotiation among different devices in a network, as control message exchange may require a time consuming alignment procedure~\cite{Shokri2015Beam}. That is, the time required to find the best set of beams at the transmitter and at the receiver to establish a mmWave link.

The seminal work of Singh \emph{et al.}~\cite{Singh2011Interference} shows the validity of \emph{pseudowired} abstraction (noise-limited network) in outdoor mmWave mesh networks. However, as shown in~\cite{Qiao2012STDMA,Shokri2015Beam}, indoor mmWave WPANs are not necessarily noise-limited. In particular, activating all links may cause a significant performance drop compared to the optimal resource allocation~\cite{Shokri2015Beam}, indicating that there may be situations in which a non-negligible multiuser interference is present; the noise power is not the only limiting factor.
Such a performance degradation increases with the number of devices in the network~\cite{Shokri2015Beam}. This indeed means that the accuracy of the noise-limited assumption to model the actual network behavior reduces with the number of links. Similar conclusions are also made in the context of mmWave cellular networks in~\cite{di2014stochastic}. The increased directionality level in a mmWave network reduces multiuser interference; however, this reduction may not be enough to take an action (e.g., resource allocation) based on the assumption of being in a noise-limited regime. It follows that a pseudowired assumption may be detrimental for the MAC layer design.

In this paper, we analyze the MAC layer throughput of a mmWave ad~hoc network, operating under slotted ALOHA protocol. To this end, we introduce a novel blockage model that captures the correlation among line-of-sight (LoS) events of different links. We derive tractable closed-form expressions for collision probability, per-link throughput, and area spectral efficiency. We analytically evaluate the impact of the transmission/reception beamwidth, transmission power, and the densities of the transmitters and obstacles on the performance metrics. The new analysis establishes that the pseudowired abstraction may not be accurate even for a modest-size mmWave ad~hoc network, and, more importantly, that mmWave networks exhibit a \emph{transitional behavior}, from a noise-limited regime to an interference-limited one. Specifically, we investigate when the interference footprint of the network does not show a binary behavior, and we show for which system parameters the network exhibits a transitional behavior during which interference may have degrees of severity. In the presence of such behavior, we investigate the pros and cons of collision-free and collision-based resource allocation protocols in mmWave systems. We conclude that a simple ALOHA protocol may significantly outperform both per-link throughput and area spectral efficiency of time division multiple access (TDMA) protocol with much less signaling overhead, while TDMA is still necessary to guarantee communication without any collision.
Detailed discussions of this paper provide guidelines for designing efficient resource allocation in mmWave networks with transitional behaviors.

The rest of the paper is organized as follows. In Section~\ref{sec: system model}, we describe the system model. The analysis of collision probability and throughput are provided in Section~\ref{sec: performance-evaluation}. Numerical results is presented in Section~\ref{sec: numerical-results}, followed by conclusions and future works in Section~\ref{sec: Conclusion}.

\section{System Model}\label{sec: system model}
We consider a mmWave wireless network with a homogeneous Poisson network of transmitters on the plane with density $\lambda_t$ per unit area, each associated to a receiver. To evaluate the collision performance of the network, we consider a reference link (called typical link) between a typical receiver, located at the origin of the Polar coordinates, and its intended transmitter, located at distance $L$ from the origin. We assume that if multiple neighbors are transmitting to the same receiver, at most one of them can be successfully decoded by that receiver~\cite{Singh2011Interference}.
Therefore, all transmitters in the network act as potential interferers for the typical receiver (the receiver of the typical link).

We consider a slotted ALOHA protocol without power control to derive a lower bound on the performance. That is, the transmission power of all links is $p$. We let every transmitter (interferer) be active with probability $\rho_a$. Further, similar to~\cite{di2014stochastic}, we assume that transmitter of every link is spatially aligned with its intended receiver, so there is no beam-searching phase, see~\cite{Shokri2015Beam} for more information on how to model beam-searching overhead and to evaluate its impact on the network throughput.
If there is no obstacle on the link between transmitter $i$ and the origin, we say that transmitter $i$ has LoS condition respect to the typical receiver, otherwise it is in non-LoS (NLoS) condition. We consider a distance-dependent path-loss with exponent $\alpha$, as commonly assumed for MAC level performance evaluations~\cite{singh2009blockage,Singh2011Interference}. This simple model allows deriving tractable closed-form expressions for the collision probability and for the throughput, and at the same time, enables us to draw general conclusions about the network operating regime.

We use the \emph{protocol model} of interference~\cite{cardieri2010modeling}, which is common in the MAC layer analysis~\cite{Singh2011Interference,cardieri2010modeling}. In this model, for a given distance between a reference receiver and its transmitter, a \emph{collision}\footnote{Note that ``collision'' is defined as the outage event due to strong interference from other transmitters. Note that an outage can also occur due to low signal-to-noise ratio (SNR) even without any interference.} occurs if there is at least another interfering transmitter no farther than a certain distance of the reference receiver, hereafter called \emph{interference range}.
Besides its simplicity, recent study~\cite{Shokri2015Right} reveals that the special characteristics of mmWave networks makes the protocol model quite accurate in those networks. Essentially, as the probability of having LoS condition on a link decreases exponentially with the distance~\cite{Rappaport2015wideband}, far away transmitters will be most probably blocked (in NLoS condition) and therefore cannot contribute in the interference a receiver observes. It follows that we may consider only the impact of spatially close interferers with negligible loss in the accuracy of the interference model.

At the MAC layer, the beamforming is represented by using an ideal sector antenna pattern~\cite{di2014stochastic}, where the antenna gain is a constant for all angles in the main lobe and equal to a smaller constant in the side lobe.
This model allows capturing the interplay between antenna gain and half power beamwidth. Moreover, recent studies show that the interferers that are aligned with the typical receiver are dominant interferers and can cause collision~\cite{di2014stochastic}. Assuming the same operating beamwidth $\theta$ for all devices in both transmission and reception modes, neglecting the sidelobe radiations, and considering 2D beamforming, antenna gain for each transmitter/receiver is $2 \pi /\theta$ on the main lobe~\cite{Shokri2015Beam}. Further, we recall that the penetration loss due to solid materials in mmWave frequencies is so high (e.g., 35~dB due to the human body~\cite{Rangan2014Millimeter}) that the typical receiver can receive interference only from transmitters with LoS condition, called \emph{LoS interferers}.


\noindent \textbf{Blockage model:} As the operating beamwidth becomes narrower, the events of observing obstacles on the link between the typical receiver and individual interferers become more and more correlated, so the LoS condition for different interferers becomes correlated. One obstacle can block many interferes located very close to each other in the angular domain. Therefore, the assumption of independent LoS conditions on the links among the typical receiver and different interferers, as considered in~\cite{di2014stochastic}, is not adequate for mmWave systems. This inaccuracy increases as the number of links increases or if transmitters appear in spatial clusters. Assume that the center of obstacles follow a homogeneous Poisson point process with density $\lambda_o$ independent of the communication network. For sake of simplicity, we may use obstacle to refer the center of that obstacle. To capture the correlation among LoS conditions, we use the following blockage model: we define a \emph{coherence angle} $\theta_c$ over which the LoS conditions are correlated. That is, inside a coherence angle, an obstacle blocks all the interferers behind itself, so there is no LoS conditions in distances $d \geq l$ respect to the origin and consequently no LoS interferers, if there is an obstacle at distance $l$. However, there is no correlation between LoS condition events in different coherence angle intervals, simply in different circle sectors with angle $\theta_c$. The coherence angle depends on the average size and density of obstacles in the environment.

\section{Performance Evaluation}\label{sec: performance-evaluation}

\subsection{Collision Analysis}\label{sec: collision-analysis}
We denote by $p$ the transmission power, by $a$ the average channel attenuation at reference distance 1~meter, by $d_{\max}$ the interference range, by $\beta$ the minimum SINR threshold at the typical receiver, and by $\sigma$ the noise power. The interference range $d_{\max}$ is defined as the maximum distance an interferer can be from the receiver and still cause collision/outage.
Remarkably, although this model may be oversimplified in conventional networks (below 6~GHz), we have shown in~\cite{Shokri2015Right} that it is extremely accurate for mmWave networks.
The SINR at the typical receiver due to transmissions of the intended transmitter at distance $L$ and an interferer located at distance $d$, with LoS condition and aligned transmitter/receiver, is
\begin{equation*}
  \frac{p \left( 2 \pi/ \theta \right)^2 a L^{-\alpha}}{p \left( 2 \pi/\theta \right)^2 a d^{-\alpha} +\sigma} \:.
\end{equation*}
Comparing the above SINR expression to $\beta$, we get that the interference range
\begin{equation}\label{eq: d-max}
d_{\max} = \left( \frac{L^{-\alpha}}{\beta} - \frac{\sigma}{ap}\left( \frac{\theta}{2\pi} \right)^{2}\right)^{-1/\alpha} \:.
\end{equation}

A transmitter at distance $d$ from the typical receiver can cause collision provided that the following conditions hold: (a) it is active, (b) the typical receiver is inside its main lobe, (c) it is inside the main lobe of the typical receiver, (d) it is located inside the interference range $d \leq d_{\max}$, and (e) it is in LoS condition with respect to the typical receiver. These conditions are illustrated in Fig.~\ref{fig: IntRegion}, where the typical transmitter, interferers, and obstacles are represented by a green circle, red triangles, and blue rectangles, respectively.
Also, the highlighted part is the sector from which the typical receiver is receiving signal.
Interferers~1, 2, and 3 cannot cause collision at the typical receiver due to condition~(c),~(d), and~(e), respectively.

Due to random deployment of the devices, the probability that the typical receiver is inside the main lobe of an active transmitter is $\theta/2 \pi$. Therefore, if the density of transmitters is $\lambda_t$ and the average probability of being active for every transmitter is $\rho_a$, the interferers for which conditions~(a) and~(b) hold follow a Poisson point process with density $\lambda_I = \rho_a \lambda_t \theta/ 2 \pi$ per unit area. Conditions~(c) and~(d) reduce the area over which a potential interferer can cause collision. For condition~(e), we need to elaborate the blockage model. The typical receiver observes $k = \lceil \theta / \theta_c \rceil $ sectors, each with angle $\theta_c$, where $\lceil \cdot \rceil$ is the ceiling function. For the sake of simplicity, we assume that $ \theta / \theta_c $ is an integer; however, the analysis can be extended, with more involved calculations, to the general case.
We take the general assumption that the typical transmitter is uniformly distributed in the circle sector with angle $\theta$ that the typical receiver is pointing. Therefore, having a fix coordinate for the typical transmitter is a special case of our analysis. It is straightforward to see that the typical transmitter is located in one of these $k$ sectors with uniform distribution and its radial distance to the typical receiver $L$ is a continuous random variable with probability density function $f_{L}(\ell) = 2\ell / d_{\max}^{2}$. Without loss of generality, we assume that the typical transmitter is in sector $k$. It means that we have a combination of interferers and obstacles in the first $k-1$ sectors. In the last sector, we cannot have any obstacle in the circle sector with angle $\theta_c$ and radius $\ell$, as the typical transmitter in $\ell$ should be in the LoS condition, otherwise the typical link will not be established and collision cannot happen. Further dividing the last sector into two sub-sectors, corresponding to the distances $\left( 0, \ell \right]$ and $\left( \ell , d_{\max}\right]$, the first sub-sector contains only interferers, whereas the second one has both interferers and obstacles.
In the following, we first derive the probability of receiving collision from individual sectors and then compute the collision probability in general.

\begin{figure}[!t]
\centering
  \includegraphics[width=0.84\columnwidth]{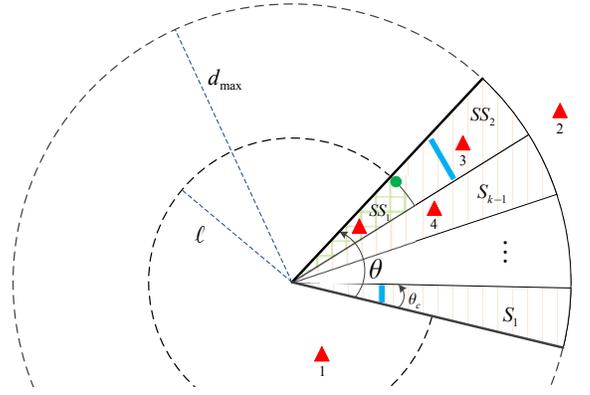}\\

  \caption{Hatched lines show potential interference zone. Operating beamwidth $\theta$ is divided into $k$ sectors of angle $\theta_c$. The typical receiver is on the origin. The typical transmitter, shown by a green circle, is on sector $k$ at distance $\ell$ of the typical receiver. $S_i$ shows sector $1 \leq i \leq k-1$. ${SS}_1$ and ${SS}_2$ are two sub-sectors of sector $k$. Zones with orange hatched lines have both random interferers and obstacles, represented by a red triangle and a blue rectangle. Zones with green hatched lines have only random interferers. $d_{\max}$ is the interference range.}
  \label{fig: IntRegion}
\end{figure}

Let $A_d$ be the area of a circle sector with radius $d$ and angle $\theta_c$.
Let $I_{(1)}$ and $O_{(1)}$ be the distance of the nearest interferer and obstacle to the origin, respectively. Given sector $s$, $1 \leq s \leq k-1$, PDFs of $I_{(1)}$ and $O_{(1)}$ are $f_{I_{(1)}}(x) = \lambda_I \theta_c x e^{-\lambda_I x^2 \theta_c/2 }$ and $f_{O_{(1)}}(y) = \lambda_o \theta_c y e^{-\lambda_o y^2 \theta_c/2 }$, respectively~\cite[Equation~(8)]{Haenggi2005onDistances}, where $\lambda_I$ and $\lambda_o$ are the density of potential interferers and obstacles per unit area. Under the protocol model, the typical receiver experiences LoS interference, with probability $\Pr[\mathrm{LI}]$, provided that $I_{(1)} < O_{(1)}$ while $0 \leq I_{(1)} \leq d_{\max}$. Considering independence of the interferer and obstacles processes, we have
\begin{align}\label{eq: LoSISec1:K-1}
  \Pr[\mathrm{LI}] & =
  \int_{x=0}^{d_{\max}} \int_{y=x}^{\infty} \! \lambda_I \theta_c x e^{-\lambda_I x^2 \theta_c/2 } \lambda_o \theta_c y e^{-\lambda_o y^2 \theta_c/2 } \, \mathrm{d}x \mathrm{d}y    \nonumber \\
   & = \lambda_I  \theta_{c} \int_{x=0}^{d_{\max}} x e^{-\left( \lambda_I + \lambda_o \right) x^2 \theta_c/2 } \, \mathrm{d}x \nonumber  \\
   & = \frac{\lambda_I}{\lambda_o + \lambda_I} \left( 1 - e^{-\left( \lambda_o + \lambda_I \right) A_{d_{\max}}} \right) \:,
\end{align}
where $\lambda_I = \rho_a \lambda_t \theta / 2 \pi$ and $A_{d_{\max}} = \theta_c d_{\max}^{2}/2$.

To find the probability of receiving LoS interference from sector $k$, we first note that sector $k$ consists of two sub-sectors, corresponding to the distances $\left( 0, \ell \right]$ and $\left( \ell , d_{\max}\right]$. In the first sub-sector, there is no obstacle, whereas we have regular appearance of the obstacles in the second sub-sector, see Fig.~\ref{fig: IntRegion}. Noting that $O_{(1)} \geq \ell$ in the last sector, the probability of LoS interference from sector $k$ is
\begin{align}\label{eq: LoSIsubsec1}
& \int_{x=0}^{\ell^{-}} \int_{y=\ell}^{\infty} \! \lambda_I \theta_c x e^{-\lambda_I x^2 \theta_c/2 } \lambda_o \theta_c y e^{-\lambda_o y^2 \theta_c/2 } e^{\lambda_o A_{\ell}} \, \mathrm{d}x \mathrm{d}y \nonumber \\
& \hspace{3.5mm} + \int_{x=\ell}^{d_{\max}} \int_{y=x}^{\infty} \! \lambda_I \theta_c x e^{-\lambda_I x^2 \theta_c/2 } \lambda_o \theta_c y e^{-\lambda_o y^2 \theta_c/2 } e^{\lambda_o A_{\ell}}\, \mathrm{d}x \mathrm{d}y    \nonumber \\
& = 1 - e^{- \lambda_I A_{\ell}} + \frac{\lambda_Ie^{\lambda_o A_{\ell}}}{\lambda_o + \lambda_I} \left( e^{-\left( \lambda_o + \lambda_I \right) A_{\ell}} - e^{-\left( \lambda_o + \lambda_I \right) A_{d_{\max}}} \right) \:.
\end{align}

\begin{figure*}[!t]
\normalsize
\begin{equation}\label{eq: CollProbFinal}
\rho_c = 1 - \hspace{-0.5mm} \int_{\ell=0}^{d_{\max}} \! \left( \frac{ \lambda_o + \lambda_I e^{- \left( \lambda_o + \lambda_I \right) \theta_c d_{\max}^{2}/2}}{\lambda_o + \lambda_I} \right)^{\hspace{-1mm} \lceil \theta / \theta_c \rceil -1} \hspace{-1mm} \left( e^{- \lambda_I \theta_c \ell^{2}/2} - \frac{\lambda_I  e^{\lambda_o A_{\ell}}}{\lambda_o + \lambda_I} \hspace{-0.7mm} \left( e^{-\left( \lambda_o + \lambda_I \right) \theta_c \ell^{2}/2} - e^{-\left( \lambda_o + \lambda_I \right) \theta_c d_{\max}^{2}/2} \right)\right) \frac{2\ell}{d_{\max}^{2}}  \mathrm{d}\ell .
\end{equation}
\hrulefill
\vspace*{-0.1pt}
\begin{equation}\label{eq: CollProbCond}
\rho_{c \mid \ell} = 1 - \left( \frac{ \lambda_o + \lambda_I e^{- \left( \lambda_o + \lambda_I \right) A_{d_{\max}}}}{\lambda_o + \lambda_I} \right)^{\lceil \theta / \theta_c \rceil -1} \left( e^{- \lambda_I A_{\ell}} - \frac{\lambda_I  e^{\lambda_o A_{\ell}}}{\lambda_o + \lambda_I} \left( e^{-\left( \lambda_o + \lambda_I \right) A_{\ell}} - e^{-\left( \lambda_o + \lambda_I \right) A_{d_{\max}}} \right) \right) \:.
\end{equation}
\vspace*{-0.1pt}
\hrulefill
\begin{equation}\label{eq: CollProbBounds}
1 - \left( \frac{ \lambda_o + \lambda_I e^{- \left( \lambda_o + \lambda_I \right) \theta_c d_{\max}^{2}/2}}{\lambda_o + \lambda_I} \right)^{\lceil \theta / \theta_c \rceil } \leq  \rho_c \leq 1 - e^{- \lambda_I \theta_c d_{\max}^{2}/2} \left( \frac{ \lambda_o + \lambda_I e^{- \left( \lambda_o + \lambda_I \right) \theta_c d_{\max}^{2}/2}}{\lambda_o + \lambda_I} \right)^{\lceil \theta / \theta_c \rceil -1} \:.
\end{equation}
\vspace*{-1mm}
\hrulefill
\end{figure*}

\begin{prop}\label{prop: Collision Probability_Bounds}
Let $\lambda_t$ and $\lambda_o$ denote the density of the transmitters and obstacles per unit area. Let $\rho_a$ be the probability that a transmitter is active. Consider blockage and interference models, described in Fig.~\ref{fig: IntRegion}. Let $d_{\max}$, $\theta$, and $\theta_c$ be the interference range, operating beamwidth, and coherence angle, respectively. Then, the collision probability, denoted by $\rho_c$, is given by~\eqref{eq: CollProbFinal} on the top of page~\pageref{eq: CollProbFinal}, where $\lambda_I = \rho_a \lambda_t \theta / 2 \pi$.
\end{prop}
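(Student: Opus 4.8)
\emph{Proof (sketch).}
The plan is to compute the probability of \emph{no} collision conditioned on the radial distance $\ell$ of the typical transmitter, and then to decondition over $\ell$. First I would observe that, under the protocol model, the typical receiver suffers a collision if and only if at least one of the $k=\lceil\theta/\theta_c\rceil$ coherence sectors tiling its pointing sector contains a LoS interferer within the interference range $d_{\max}$: inside a coherence sector the nearest obstacle blocks every interferer behind it, so that sector yields a LoS interferer exactly when its nearest interferer lies both within $d_{\max}$ and closer to the origin than its nearest obstacle. The potential interferers that survive conditions~(a)--(b) form a Poisson process of density $\lambda_I=\rho_a\lambda_t\theta/2\pi$ and the obstacles an independent Poisson process of density $\lambda_o$; since Poisson processes restricted to disjoint regions are independent, the ``clean sector'' events are mutually independent across the $k$ sectors, so $\Pr[\text{no collision}\mid\ell]$ factorizes into a product of per-sector probabilities.

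Next I would evaluate the two kinds of factors. For each of the $k-1$ sectors not containing the typical transmitter there is no additional conditioning, so the probability of a LoS interferer is exactly $\Pr[\mathrm{LI}]$ of~\eqref{eq: LoSISec1:K-1}, obtained by integrating the nearest-interferer and nearest-obstacle densities $f_{I_{(1)}},f_{O_{(1)}}$ of~\cite{Haenggi2005onDistances} over $\{I_{(1)}<O_{(1)},\,I_{(1)}\le d_{\max}\}$; its complement collapses to $(\lambda_o+\lambda_I e^{-(\lambda_o+\lambda_I)A_{d_{\max}}})/(\lambda_o+\lambda_I)$, which enters~\eqref{eq: CollProbCond} to the power $k-1$. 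For the sector containing the typical transmitter, requiring the typical link to be LoS is equivalent to the obstacle process being empty on the sub-sector $(0,\ell]$; conditioning a Poisson process on emptiness of that region leaves a Poisson process on $(\ell,d_{\max}]$ but reweights its nearest-obstacle density by $e^{\lambda_o A_\ell}=1/\Pr[O_{(1)}>\ell]$. Splitting on whether the nearest interferer of this sector falls in $(0,\ell]$ (automatically unblocked, since every obstacle is beyond $\ell$) or in $(\ell,d_{\max}]$ (unblocked iff it precedes the nearest obstacle) produces the two integrals of~\eqref{eq: LoSIsubsec1}, so the ``clean'' probability of this sector given $\ell$ is their complement $e^{-\lambda_I A_\ell}-\frac{\lambda_I e^{\lambda_o A_\ell}}{\lambda_o+\lambda_I}\bigl(e^{-(\lambda_o+\lambda_I)A_\ell}-e^{-(\lambda_o+\lambda_I)A_{d_{\max}}}\bigr)$.

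Multiplying the $k-1$ identical factors by this last one gives the conditional collision probability $\rho_{c\mid\ell}$ of~\eqref{eq: CollProbCond}. Finally, since the typical transmitter is uniform over the pointing sector of radius $d_{\max}$, its radial distance has density $f_L(\ell)=2\ell/d_{\max}^2$ on $[0,d_{\max}]$ and is independent of the interferer and obstacle processes outside its own sector; deconditioning yields $\rho_c=\int_{0}^{d_{\max}}\rho_{c\mid\ell}\,(2\ell/d_{\max}^2)\,\mathrm{d}\ell$, and substituting $A_\ell=\theta_c\ell^2/2$, $A_{d_{\max}}=\theta_c d_{\max}^2/2$ while keeping the $1-(\cdot)$ form gives~\eqref{eq: CollProbFinal}. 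I expect the one delicate step to be the conditioning of the last sector: one must verify that the LoS requirement on the typical link merely empties the obstacle process on $(0,\ell]$ of that single sector and leaves everything else---the interferer process of the same sector and all processes in the other $k-1$ sectors---intact, which is exactly what the factor $e^{\lambda_o A_\ell}$ records. The rest is routine: the Gaussian-type integrals $\int x\,e^{-cx^2}\,\mathrm{d}x$ and the collection of the $k-1$ coincident sector factors into a single power.
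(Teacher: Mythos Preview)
Your proposal is correct and follows essentially the same approach as the paper's own proof: compute the per-sector LoS-interferer probabilities~\eqref{eq: LoSISec1:K-1} and~\eqref{eq: LoSIsubsec1}, multiply their complements across the $k$ independent sectors to obtain $\rho_{c\mid\ell}$ in~\eqref{eq: CollProbCond}, and then average over $f_L(\ell)=2\ell/d_{\max}^2$ to reach~\eqref{eq: CollProbFinal}. If anything, you spell out more carefully than the paper does why the sectors are independent and why the LoS conditioning on the typical link reduces to emptying the obstacle process on $(0,\ell]$ of sector~$k$ and nothing else.
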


\begin{proof}
Given that the typical link at length $\ell$ is established, the conditional collision probability, denoted by $\rho_{c \mid \ell}$, is equal to the probability of having at least one LoS interferer, irrespective of the sectors (sub-sectors) in which the LoS interferer(s) are. To derive the collision probability, we first find its complementary, i.e., the probability of having no LoS interferer in any sector. The latter is equal to complementary of the event of having collision in any sector, given by~\eqref{eq: LoSISec1:K-1} and \eqref{eq: LoSIsubsec1}. Considering mutual independence of different sectors, $\rho_{c \mid \ell}$ is given by~\eqref{eq: CollProbCond}.
The last step of characterizing the collision probability is taking an average of $\rho_{c \mid \ell}$ over distribution of $\ell$, which is distributed with PDF $f_L(\ell) = 2\ell / d_{\max}^{2}$ within $(0,d_{\max}]$. This completes the proof.
\end{proof}

Observe that the collision probability given by~\eqref{eq: CollProbFinal} is strictly increasing with $\ell$. Therefore, we can derive its lower and upper bounds by substituting $\ell = 0$ and $\ell = d_{\max}$ into~\eqref{eq: CollProbFinal}, respectively. The bounds are given by~\eqref{eq: CollProbBounds}. These bounds provide tractable approximations of the collision probability, given by~\eqref{eq: CollProbFinal}, to design pessimistic/optimistic collision-aware resource allocation strategies.

\subsection{Noise-Limited or Interference-Limited}\label{sec: noise-int-limited}
Using the closed-form expression of the collision probability, established in~\eqref{eq: CollProbFinal}, we now derive average per-link throughput and analyze the regime at which the network operates. We first note that the typical transmitter is active with probability $\rho_a$. Its transmission to the receiver at distance $\ell$ is successful if there is no blockage, which occurs with probability $ e^{- \lambda_o A_{\ell}} $, and no collision, which occurs with probability $\left( 1 - \rho_{c \mid \ell} \right)$. Assuming transmission of one packet per slot, the average throughput will be equal to the average success probability. Hence, the average per-link throughput for a slotted ALOHA system, denoted by $r_{_{\text{S-ALOHA}}}$, is
\begin{equation}\label{eq: MACthroughputFinal0}
r_{_{\text{S-ALOHA}}} = \int_{\ell=0}^{d_{\max}} \! \rho_a e^{- \lambda_o A_{\ell}} \left( 1 - \rho_{c \mid \ell} \right) \frac{2\ell}{d_{\max}^{2}}\, \mathrm{d}\ell \quad  {\mbox{packets/slot}}  \:,
\end{equation}
which can be tightly bounded using~\eqref{eq: CollProbBounds}. For a given $\rho_a$, the throughput is uniquely determined by the collision probability. It follows that we can study the collision probability, instead of the throughput, to identify the network operating regime.

By definition, we are in the \emph{noise-limited} regime if the collision probability is too small for given density of the obstacles, density of the transmitters, and operating beamwidth, among the main parameters. However, if there is at least a LoS interferer, which limits the throughput performance of the network, we are in the \emph{LoS interference-limited} regime. This suggests the following conclusion: \emph{A mmWave network with directional communication may have transitional region, i.e., the region in which there is a transition from a noise-limited regime to a LoS interference-limited one}. This region depends on the density of interferers and obstacles, transmission probability, operating beamwidth, transmission powers, and coherence angle. Due to lack of space, we only present the following proposition without proof, and will use it in the following sections. Detailed proofs are available in the extended version of this paper in~\cite{shokri2015collision}.

\begin{prop}\label{prop: TDMAthroughputPerformance}
Let $\lambda_t$ and $\lambda_o$ denote the density of the transmitters and obstacles per unit area. Let $\rho_a$ be the probability that a transmitter is active. Consider blockage and interference models, described in Fig.~\ref{fig: IntRegion}. Let $d_{\max}$, $\theta$, and $\theta_c$ be the interference range, operating beamwidth, and coherence angle, respectively.
Let $A$ denote the area over which scheduler (either slotted ALOHA or TDMA) regulates the transmissions of the transmitters. Let $r_{_{\text{TDMA}}}$, ${\mathrm{ASE}}_{\text{TDMA}}$, ${\mathrm{ASE}}_{\text{S-ALOHA}}$ be per-link throughput and area spectral efficiency (ASE) of TDMA and slotted ALOHA. Then, we have
\begin{equation}\label{eq: TDMALinkThroughput}
r_{_{\text{TDMA}}} =  \left( \frac{1 - e^{-\lambda_t A}}{\lambda_t A} \right) \left( \frac{1-e^{-\lambda_o A_{d_{\max}}}}{\lambda_o A_{d_{\max}}} \right) \:,
\end{equation}
\begin{equation}\label{eq: TDMA-ASE}
{\mathrm{ASE}}_{\text{TDMA}} =  \frac{1-e^{-\lambda_o A_{d_{\max}}}}{A \lambda_o A_{d_{\max}}} \:,
\end{equation}
and
\begin{equation}\label{eq: AreaSpecEffic}
{\mathrm{ASE}}_{\text{S-ALOHA}} = \frac{1 + A \lambda_t}{A} \int_{\ell=0}^{d_{\max}} \! \frac{2\rho_a}{d_{\max}^{2}}  e^{- \lambda_o A_{\ell}} \left( 1 - \rho_{c \mid \ell} \right) \ell \, \mathrm{d}\ell \:,
\end{equation}
where $A_{\mathrm{x}} = \theta_c \mathrm{x}^2/2$, and $\rho_{c \mid \ell}$ is given by~\eqref{eq: CollProbCond}.
\end{prop}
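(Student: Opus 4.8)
The plan is to handle the three identities separately, all within the \emph{typical-link} viewpoint already used for Proposition~\ref{prop: Collision Probability_Bounds}: by the standard typical-link construction (Slivnyak's theorem) the remaining transmitters still form a PPP of density $\lambda_t$, so the scheduling region of area $A$ contains the typical link together with $N\sim\mathrm{Poisson}(\lambda_t A)$ other links, and the obstacle process is independent of all of this. First I would isolate the ingredient common to all three formulas, namely the probability that the typical link (length $\ell$ with PDF $f_L(\ell)=2\ell/d_{\max}^2$ on $(0,d_{\max}]$) is in LoS: this is $\int_{0}^{d_{\max}} e^{-\lambda_o\theta_c\ell^2/2}\,(2\ell/d_{\max}^2)\,\mathrm{d}\ell$, which the substitution $u=\ell^2$ reduces to $(1-e^{-\lambda_o A_{d_{\max}}})/(\lambda_o A_{d_{\max}})$ with $A_{d_{\max}}=\theta_c d_{\max}^2/2$.

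For \eqref{eq: TDMALinkThroughput}: since TDMA assigns orthogonal slots no collision is possible, so a link succeeds exactly when it is scheduled \emph{and} in LoS. Under round-robin scheduling the typical link is served in a fraction $1/(N+1)$ of the slots, so by the independence of the transmitter and obstacle processes its throughput factors as $\E[1/(N+1)]$ times the LoS probability above. The only remaining step is the elementary Poisson series $\E[1/(N+1)]=e^{-\mu}\sum_{n=0}^{\infty}\mu^{n}/(n+1)!=(1-e^{-\mu})/\mu$ with $\mu=\lambda_t A$, which supplies the first factor of \eqref{eq: TDMALinkThroughput}.

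For \eqref{eq: TDMA-ASE} I would \emph{not} multiply the per-link throughput by the mean number of links, because $N+1$ and $1/(N+1)$ are correlated; instead I would argue directly that in every slot exactly one of the $N+1$ links transmits and it succeeds with probability equal to the LoS probability of a uniformly chosen one of them, which --- the lengths being i.i.d.\ and independent of $N$ --- is again $(1-e^{-\lambda_o A_{d_{\max}}})/(\lambda_o A_{d_{\max}})$; dividing this per-slot, per-region throughput by $A$ gives \eqref{eq: TDMA-ASE}. For \eqref{eq: AreaSpecEffic}, by stationarity each of the $1+\lambda_t A$ links in the region (the typical link plus $\lambda_t A$ in expectation, made rigorous via Campbell's formula applied to the success indicator) carries the same mean throughput $r_{_{\text{S-ALOHA}}}$ of \eqref{eq: MACthroughputFinal0}; summing and dividing by $A$ yields $\tfrac{1+\lambda_t A}{A}\,r_{_{\text{S-ALOHA}}}$, which is \eqref{eq: AreaSpecEffic} after inserting \eqref{eq: MACthroughputFinal0} with $\rho_{c\mid\ell}$ from \eqref{eq: CollProbCond}.

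None of the computations is hard; the points I would be careful to state are (i) the bookkeeping of the ``$+1$'' --- because the typical link is always present, $N+1\geq 1$ and no conditioning on a non-empty region is needed --- and (ii) the fact, noted above, that ${\mathrm{ASE}}_{\text{TDMA}}$ cannot be obtained as (mean number of links)$\,\times\, r_{_{\text{TDMA}}}/A$ owing to the $1/(N+1)$ time-sharing, whereas the analogous product \emph{is} valid for slotted ALOHA where every link's throughput is the unconditional $r_{_{\text{S-ALOHA}}}$. I would also make explicit the independence of the obstacle process from the transmitter/scheduling process that licenses the factorization used for $r_{_{\text{TDMA}}}$; this is the one assumption the argument genuinely rests on.
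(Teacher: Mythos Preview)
The paper does not actually prove this proposition; it states explicitly that ``Due to lack of space, we only present the following proposition without proof'' and refers to the extended version~\cite{shokri2015collision} for details. There is therefore nothing in the present paper to compare your argument against.

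That said, your proposal is sound and matches the modeling assumptions the paper does make. The three computations you outline --- the LoS-probability integral reducing to $(1-e^{-\lambda_o A_{d_{\max}}})/(\lambda_o A_{d_{\max}})$ via $u=\ell^{2}$, the Poisson identity $\E[1/(N{+}1)]=(1-e^{-\mu})/\mu$, and the Campbell/typical-link averaging for the slotted-ALOHA ASE --- are all correct and are exactly the standard devices one would expect the extended version to use. Your two cautionary remarks are also well taken: the ``$+1$'' from the typical link is what makes $N{+}1\geq 1$ automatic, and your observation that ${\mathrm{ASE}}_{\text{TDMA}}\neq \E[N{+}1]\,r_{_{\text{TDMA}}}/A$ (because of the correlation between $N{+}1$ and $1/(N{+}1)$) is the one genuine subtlety in the argument and is handled correctly by your direct per-slot reasoning.
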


\subsection{Contention-Based or Contention-Free}\label{sec: contention-based-free}
The transitional behavior of interference in mmWave networks, along with pencil-beam operation, demands rethinking the proper resource allocation and interference management strategies for mmWave networks.
The noise-limited regime implies negligible multiuser interference, so activating all links at the same time provides the maximum MAC throughput. This indeed means that the throughput performance of one of the simplest collision-based protocols (slotted ALOHA) is almost equivalent to that of the optimal collision-free resource allocation, i.e., spatial TDMA also called STDMA~\cite{nelson1985spatial,Shokri2015Beam,bjorklund2003resource}. Finding the optimal resource allocation based on STDMA is an NP-hard problem in general~\cite{bjorklund2003resource} and requires the exact network topology~\cite{Shokri2015Beam}.
Discovering the topology (even partially), in turn, requires exchanging several control messages, which may be overwhelming in mmWave networks due to the alignment overhead~\cite{Shokri2015Beam,shokri2015mmWavecellular}.
To relax the computational and signaling complexities of STDMA, existing mmWave standards use TDMA as the main resource allocation strategy~\cite{802_15_3c,802_11ad}. However, TDMA does not support concurrent transmissions leading to a substantially lower network throughput --as low as 0.2 of the maximum one for a mmWave WPAN with only 10 links~\cite{Shokri2015Beam}-- compared to slotted ALOHA.
As the multiuser interference level increases (moving to the interference-limited regime), e.g., due to wider operating beamwidths or higher link densities, efficient contention-free channel access strategy will advantageously maximize the throughput performance, while guaranteeing collision-free communications, which is necessary for specific applications. This guarantee is not available in the contention-based channel access strategies.

The transitional behavior of interference in mmWave networks indicates inefficacy of existing standards and suggests a dynamic incorporation of both contention-based and contention-free phases in the resource allocation. The current mmWave standards such as IEEE~802.15.3c and IEEE~802.11ad adopt the existing resource allocation approaches that were originally developed for interference-limited networks. In particular, they introduce a contention-based phase mainly to register channel access requests of the devices inside the mmWave network. These requests are served on the following TDMA-based contention-free phase. In fact, though some data messages with low QoS requirements may be transmitted in the contention-based phase, the network traffic is mostly served in the contention-free phase irrespective of the actual network operating regime. Instead, we can leverage the transitional behavior of mmWave networks and dynamically serve the network traffic partially on the contention-based and partially on the contention-free phase, according to the actual network operating regime.
In a noise-limited regime, we can deliver most of the traffic on the contention-based phase due to negligible multiuser interference. In an interference-limited regime, many links may register their channel access requests to be scheduled on the contention-free phase. Using flexible phase duration, adjusted according to the collision level of the networks, we can reduce the use of inefficient contention-free phase, improve the network throughput (especially as the network goes to the noise-limited regime), and also guarantee the collision-free communication. Developing a proper adaptive hybrid MAC protocol for mmWave networks will be undertaken in our future studies.


\section{Numerical Results}\label{sec: numerical-results}
We simulate an ad~hoc network at 60~GHz with random number of links and obstacles with densities $\lambda_t$ and $\lambda_o$, respectively. We let all the links to be active with transmission power of 2.5 mW and assume a coherence angle of $\theta_c = 5 \degree$. Using Monte Carlo simulations, we evaluate the performance metrics by averaging over $10^6$ random topologies. Moreover, to evaluate the validity of both the blockage model and the throughput analysis, we build a system-level mmWave emulator in ns3 environment. In this emulator, we consider a random number of aligned mmWave links (aligned transmitter-receiver pairs), all operating with the same beamwidth $\theta$. The obstacles are in the shape of lines with random orientations and their lengths are uniformly distributed between 0 and 1~m. Every transmitter generates traffic with constant bit rate (CBR) 384~Mbps, the size of all packets is 5~kB, time slot duration is 100~$\mu$s, transmission rate is 1 packet per slot (link capacity around 1.5~Gbps), the transmitters have infinite buffer to save and transmit the packets, and the emulation time is 1 second.
Due to lack of space, we only show the impact of the most important parameters on the performance metrics, derived in~\eqref{eq: CollProbFinal}--\eqref{eq: AreaSpecEffic}. Detailed numerical results can be found in the extended version of this work in~\cite{shokri2015collision}.

\begin{figure}[!t]
	\centering
	\includegraphics[width=\columnwidth]{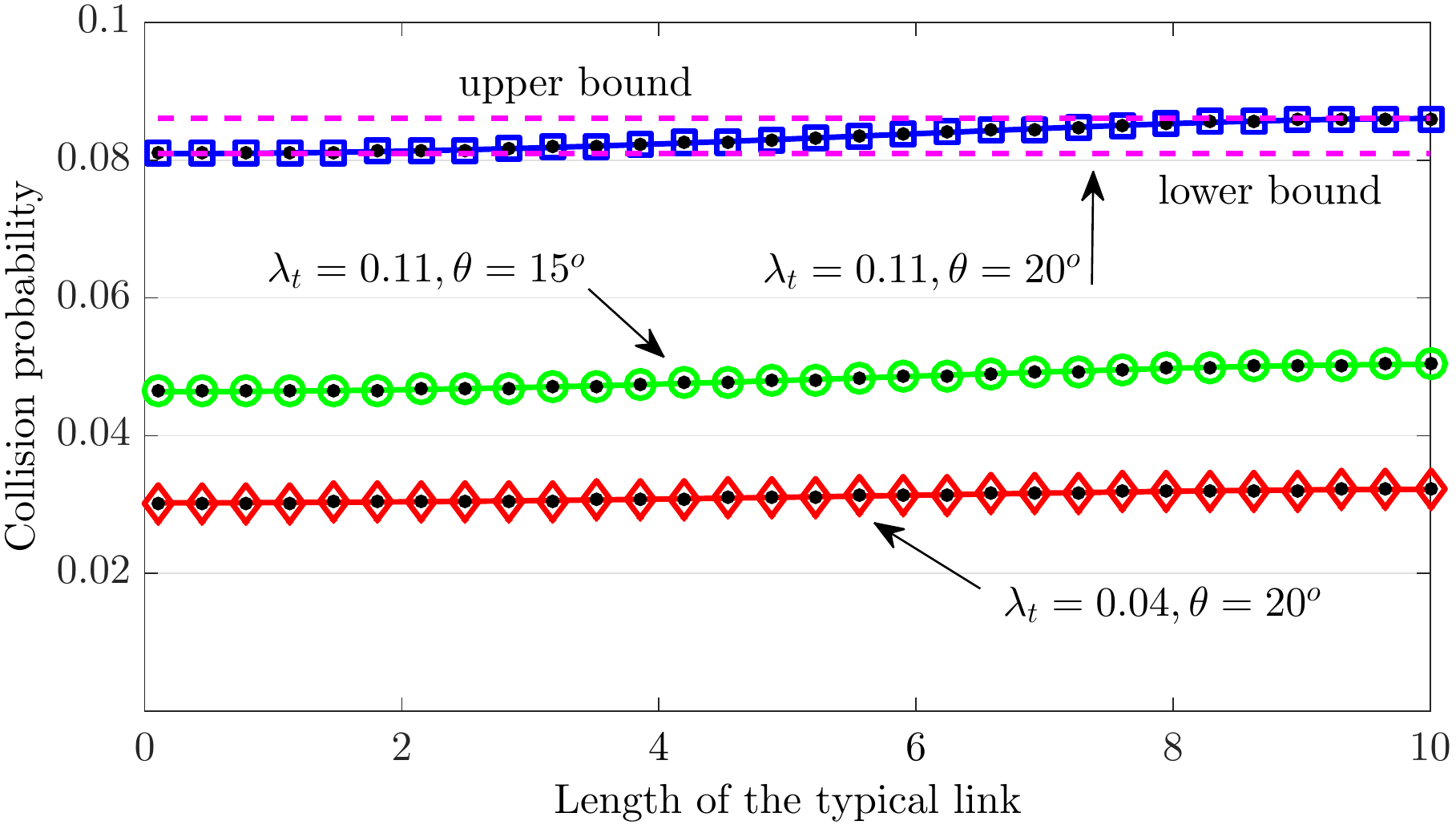}

    \vspace{-1mm}
    \caption{Collision probability as a function of the length of the typical link $\ell$, as computed by Equation~\eqref{eq: CollProbCond} and Monte Carlo simulations, marked by black circles. Upper and lower bounds are computed by Equation~\eqref{eq: CollProbBounds}.}
    \label{fig: CollisionProbVsL}
\end{figure}
Fig.~\ref{fig: CollisionProbVsL} shows collision probability against the length of the typical link $\ell$. As mentioned in Section~\ref{sec: collision-analysis}, the collision probability is an increasing function of $\ell$ with lower and upper bounds, formulated in~\eqref{eq: CollProbBounds}. First of all, analytical results well match numerical ones. Furthermore, both upper and lower bounds are tight for all examples considered, implying that the approximated closed-form bounds in~\eqref{eq: CollProbBounds} can be effectively used for pessimistic/optimistic MAC layer designs. For the example of 1 transmitter in a 3x3~${\text{m}}^2$ area and operating beamwidth $\theta = 20 \degree$, the maximum error due to those approximations, i.e., the difference between upper and lower bounds is only 0.005. This error reduces as the operating beamwidth or transmitter density reduce, as can be observed in Fig.~\ref{fig: CollisionProbVsL}.

\begin{figure}[!t]
	\centering
	\includegraphics[width=\columnwidth]{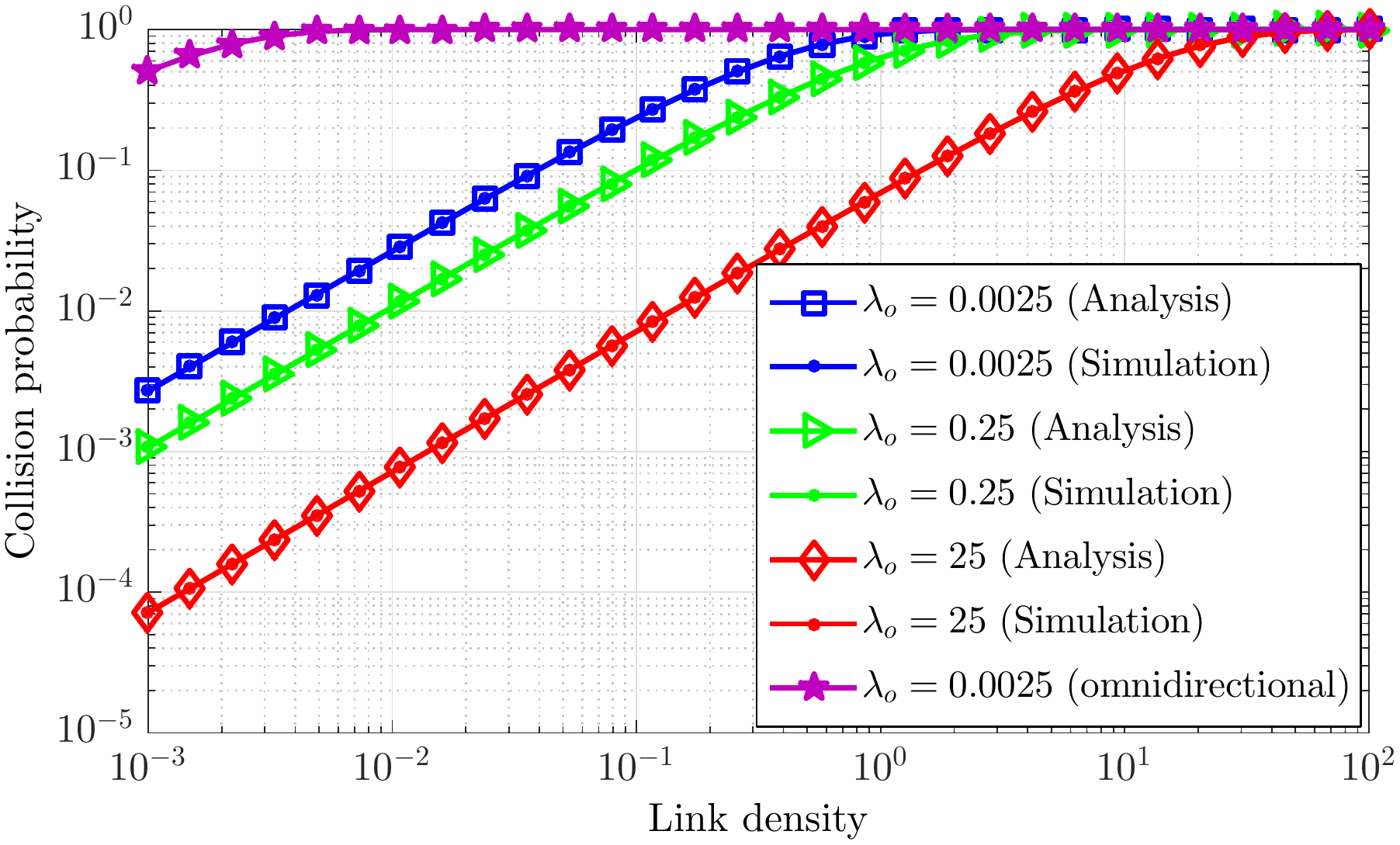}

	\vspace{-1mm}
\caption{Collision probability as a function of link density, as computed by Equation~\eqref{eq: CollProbFinal} and Monte Carlo simulations, marked by filled circles. Operating beamwidth is $\theta = 20 \degree$.}
	\label{fig: CollisionProb}
\end{figure}
Fig.~\ref{fig: CollisionProb} shows the collision probability as a function of transmitter/link density $\lambda_t$. Not surprisingly, increasing the link density increases the collision probability. Also, higher obstacle density increases blockage probability, so reduces the collision probability.
It is clear from this figure that the collision probability is non-negligible even for modest-size mmWave networks. For instance, for 1 transmitter in a 3x3~${\text{m}}^2$ area and 1 obstacle in a 20x20~${\text{m}}^2$ area, the collision probability is as much as 0.26. Increasing the density of obstacles to 1 obstacle in a 3x3~${\text{m}}^2$ area, which is not shown in Fig.~\ref{fig: CollisionProb} for the sake of clarity, the collision probability reduces to 0.17, which is still high enough to invalidate the assumption of having a noise-limited mmWave network. Moreover, there is a transition from the noise-limited regime to the LoS interference-limited one in all curves.
For benchmarking purposes, we also simulate a network with omnidirectional communications. Fixing all other parameters, we increase the transmission power to achieve the same interference range as the corresponding directional communication and investigate the collision probability. As shown in Fig.~\ref{fig: CollisionProb}, traditional networks with omnidirectional communications always experience an interference-limited regime without any transitional behavior.

\begin{figure}[!t]
	\centering
    \subfigure[]{
	\includegraphics[width=0.985\columnwidth]{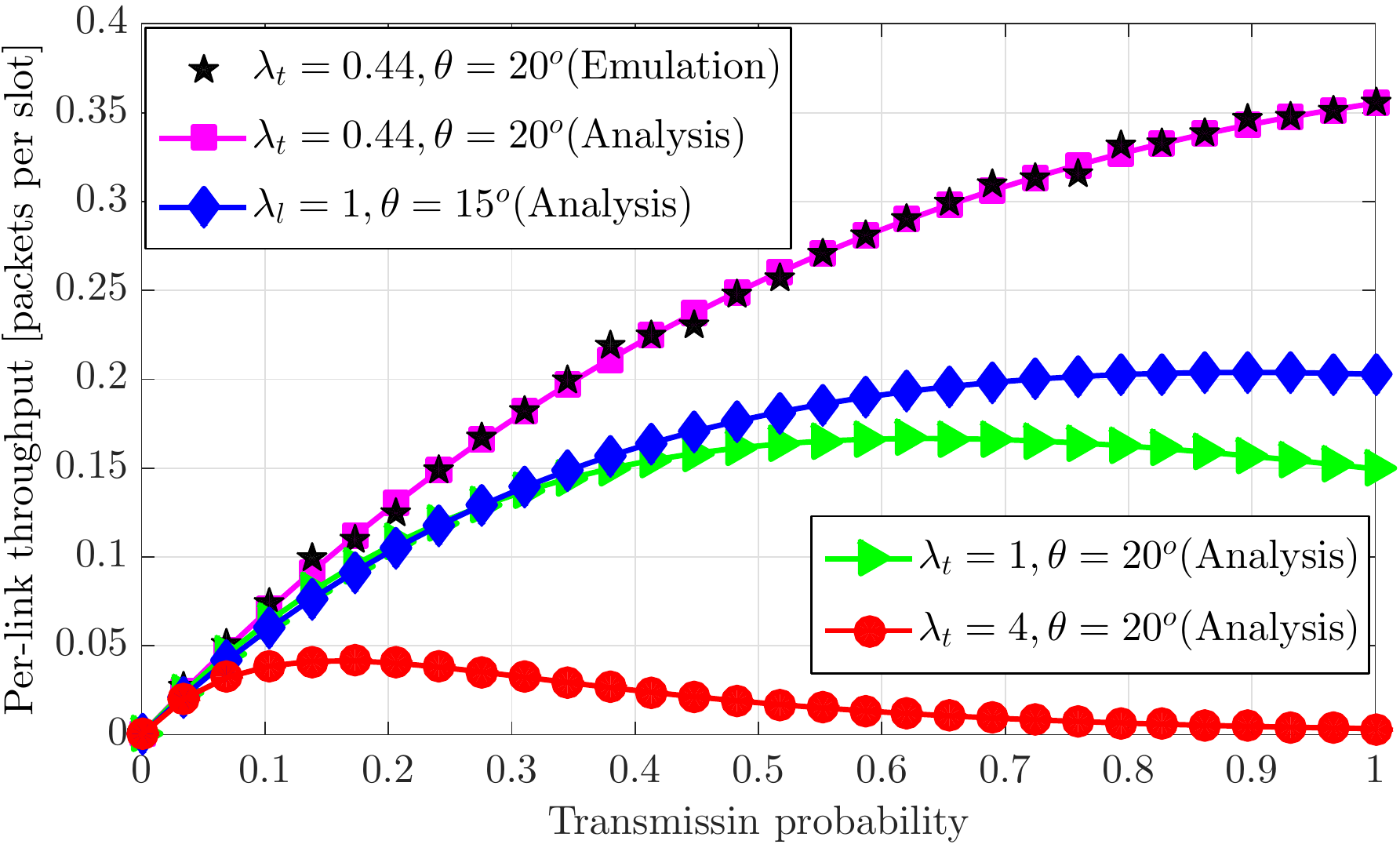}
	\label{fig: EffectiveMACThroughput}
	}
\subfigure[]{
\includegraphics[width=0.985\columnwidth]{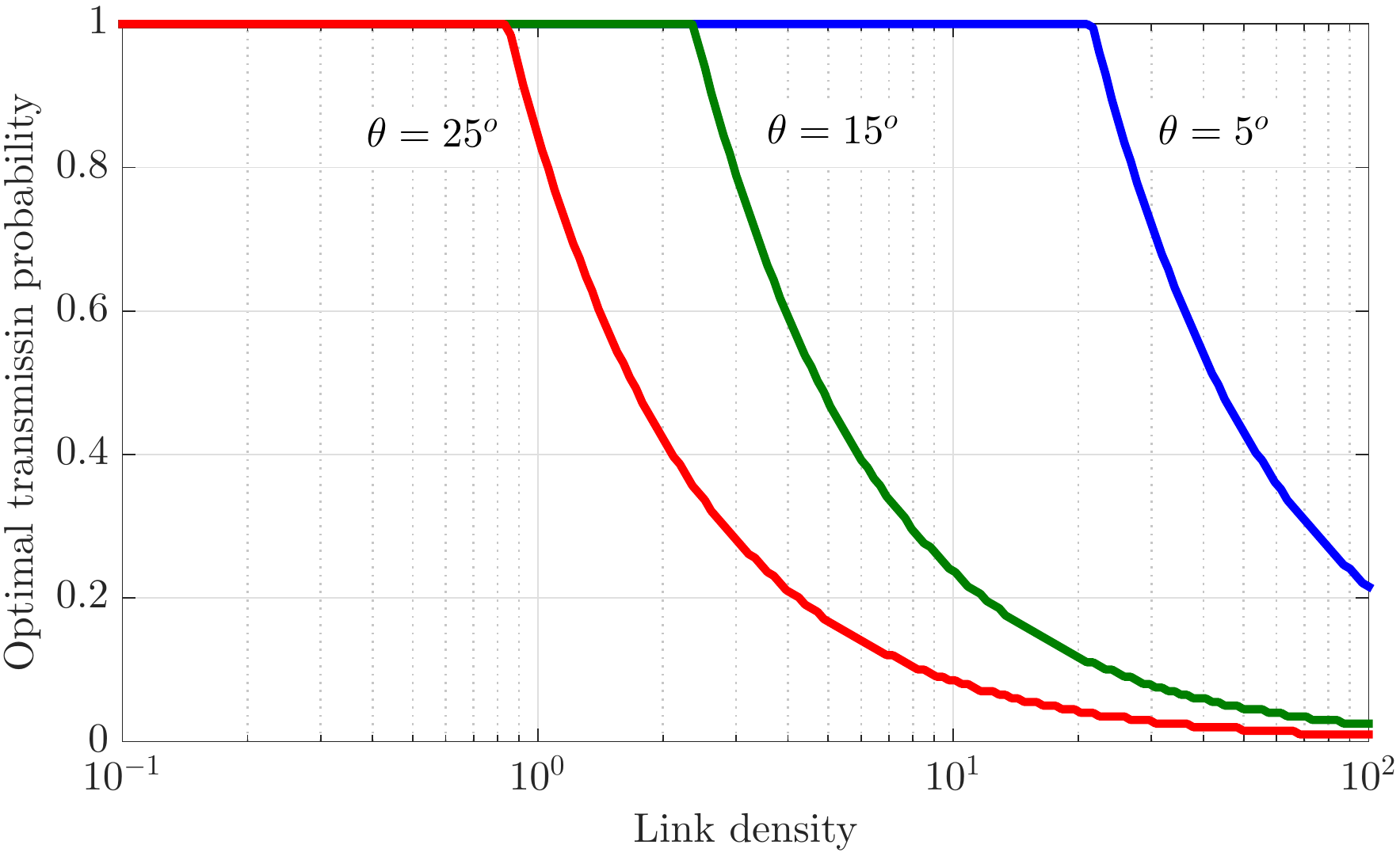}
	\label{subfig: OptimMACThr_OptimlTxProb}
	}
	
\vspace{-1mm}
\caption{Throughput analysis of slotted ALOHA:~\subref{fig: EffectiveMACThroughput} per-link throughput and \subref{subfig: OptimMACThr_OptimlTxProb} the optimal transmission probability.}
\label{fig: throughput-analysis}
\end{figure}
Per-link throughput of a mmWave ad~hoc network is illustrated in Fig.~\ref{fig: throughput-analysis}. First of all, there is a well match between the results obtained from the emulator and those from Equation~\eqref{eq: MACthroughputFinal0}, which confirms the validity of both blockage model and the throughput analysis. From Fig.~\ref{fig: EffectiveMACThroughput}, for relatively not so dense networks, e.g., 1 transmitter in a 1.5x1.5~${\text{m}}^2$ area ($\lambda_t = 0.44$), increasing the transmission probability is always beneficial, as the multiuser interference level is small enough that activating more links will not substantially reduce the average throughput of a link but increases the number of time slots over which the link is active. As the link density increases, higher collision probability introduces a tradeoff on increasing the transmission probability and reducing the interference. In a very dense network, e.g., with $\lambda_t = 4$, we should adopt a very small transmission probability to maximize the MAC throughput. Fig.~\ref{subfig: OptimMACThr_OptimlTxProb} shows the behavior of such an optimal transmission probability as a function of link density and operating beamwidth.
From Fig.~\ref{subfig: OptimMACThr_OptimlTxProb}, in many cases, the optimal transmission probability is 1, implying that the optimal transmission policy is activating all links simultaneously.
In fact, since there is a negligible multiuser interference in those cases, the performance of one of the simplest collision-based protocols (slotted ALOHA) is almost equivalent to the optimal collision-free resource allocation (STDMA) with much less signaling and computational overheads. However, as the operating beamwidth or the link density increase, we should adopt a very small transmission probability to decrease the contention level in slotted ALOHA, e.g., $\rho_a = 0.17$ for $\lambda_t = 3$ and $\theta = 25 \degree$. Alternatively, we can think of more regulated resource allocation strategies. Monitoring the collision level, we can develop an intelligent strategy to dynamically switch between contention-based and contention-free phases, which is subject of our future study.

\begin{figure}[!t]
	\centering
	\subfigure[]{
	\includegraphics[width=0.99\columnwidth]{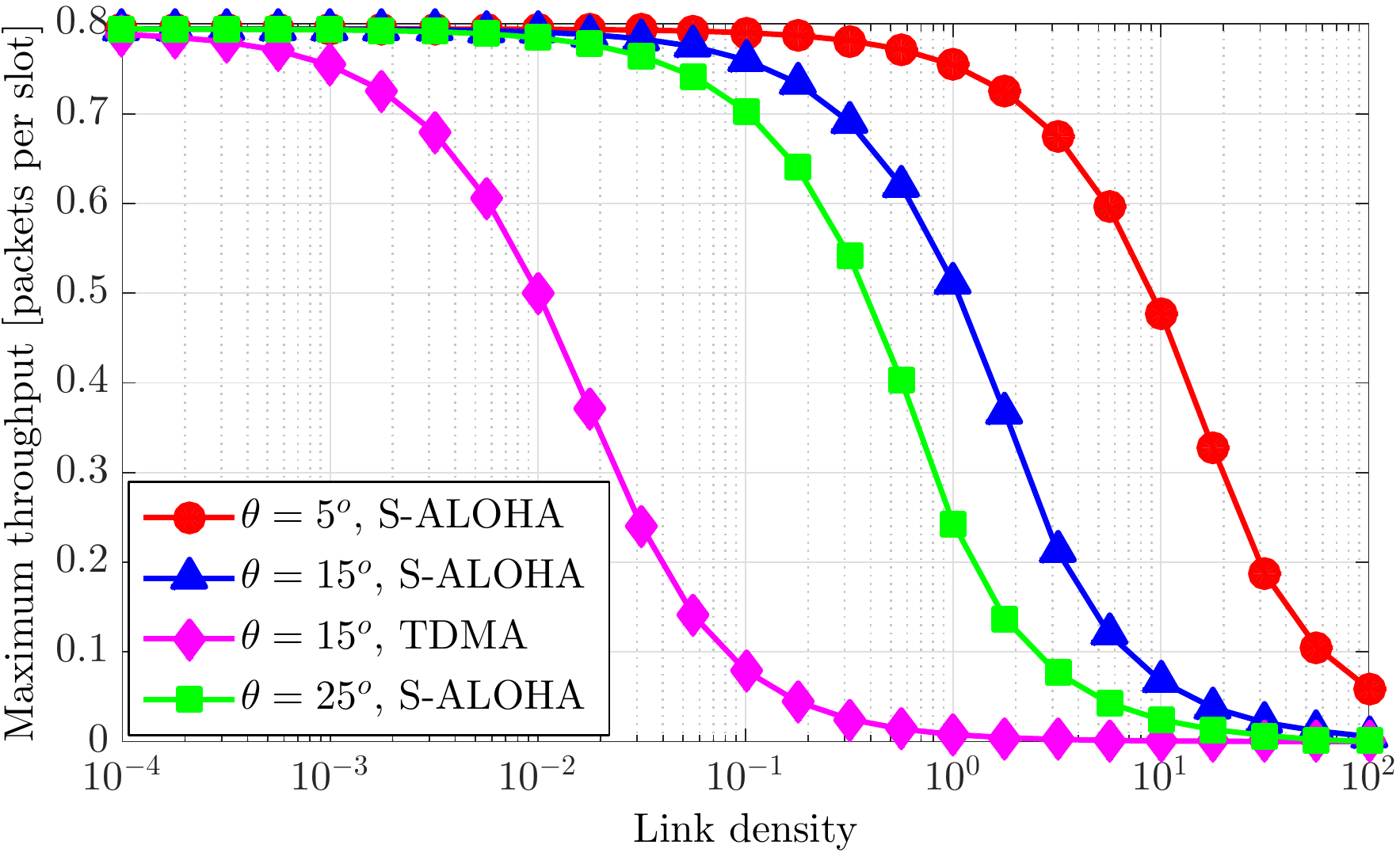}
		\label{subfig: OptimMACThr_MaxThr}
	}
    \subfigure[]{
	\includegraphics[width=\columnwidth]{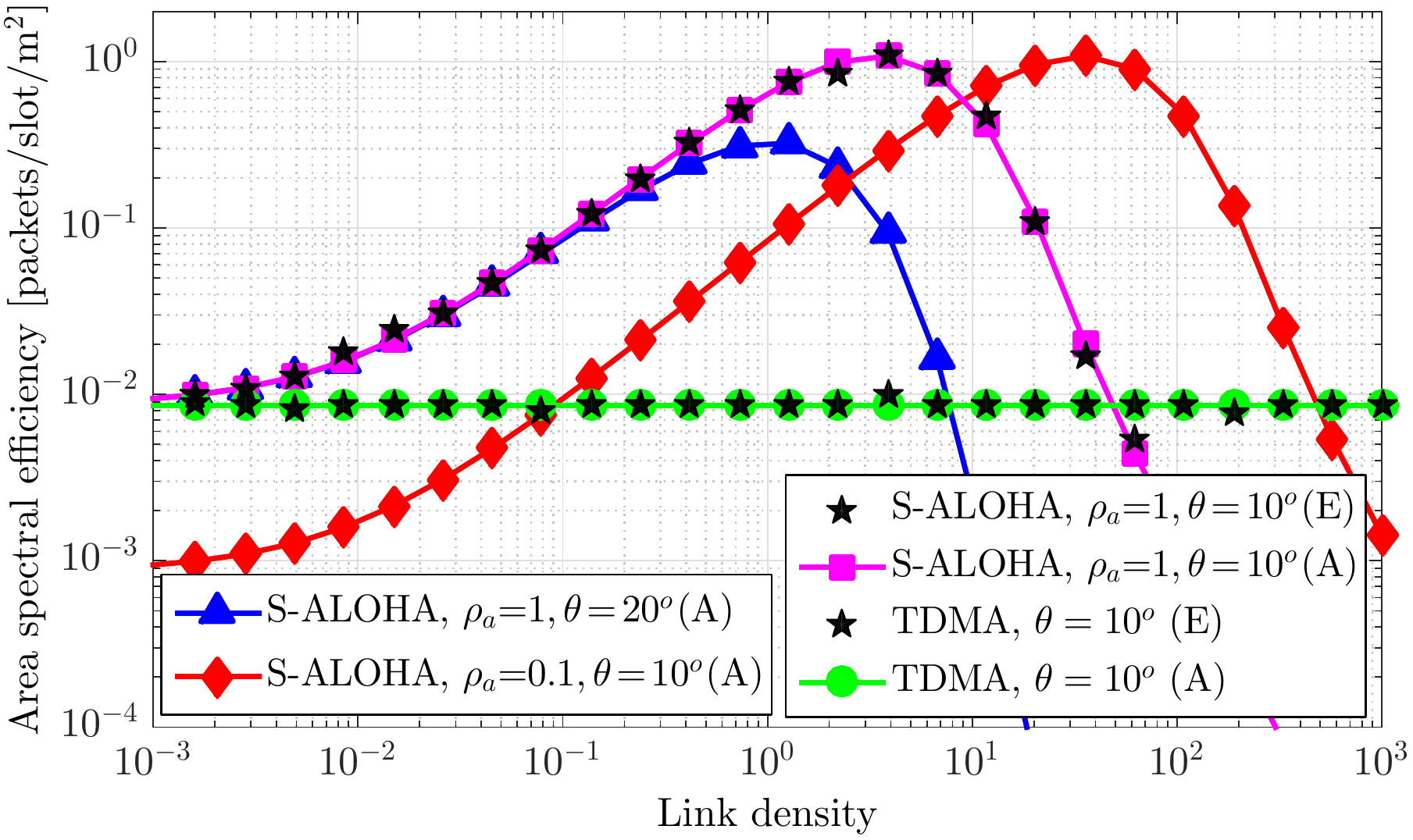}
		\label{subfig: ASE_LinkDensity}
	}
	
\vspace{-1mm}
    \caption{Performance comparison of slotted ALOHA and TDMA: ~\subref{subfig: OptimMACThr_MaxThr} per-link throughput and~\subref{subfig: ASE_LinkDensity} ASE. ``S-ALOHA'' stands for slotted ALOHA, ``(A)'' for (Analysis), and ``(E)'' for (Emulation).}
	\label{fig: OptimMACThr}
\end{figure}

Fig.~\ref{subfig: OptimMACThr_MaxThr} shows the maximum throughput of a link in slotted ALOHA, associated with the optimal transmission probability for $d_{\max} = 10$. First, per-link throughput in slotted ALOHA will be decreased with $\theta$, due to lower collision probability. Furthermore, slotted ALOHA significantly outperforms TDMA. The main reason is that TDMA realizes an orthogonal use of time resources, irrespective of the collision level, whereas slotted ALOHA re-uses all the time resources and benefits from spatial gain. This gain leads to 497\% and 2047\% throughput enhancements over TDMA for the cases of 1 transmitter in a 4x4~${\text{m}}^2$ and in a 2x2~${\text{m}}^2$ area with $\theta=15 \degree$, respectively. Note that, from Fig.~\ref{subfig: OptimMACThr_OptimlTxProb}, the optimal transmission probability is 1 in both cases, further highlighting simplicity of slotted ALOHA.
Both TDMA and slotted ALOHA show the same asymptotic zero throughput behavior, though with significantly different rates of convergence. Considering any arbitrary small $\zeta$ for the per-link throughput, from Fig.~\ref{subfig: OptimMACThr_MaxThr}, the per-link throughput of both TDMA and slotted ALOHA become lower than $\zeta$ for sufficiently large $\lambda_{t}$. However, slotted ALOHA reaches that point with almost two orders of magnitude more links in the network (e.g., see $\zeta = 0.1$), indicating its efficiency on handling massive wireless access in mmWave networks.

Fig.~\ref{subfig: ASE_LinkDensity} illustrates ASE of slotted ALOHA and TDMA as a function of link density. Again, there is a well coincidence among the analytical results obtained from Equations~\eqref{eq: TDMA-ASE} and~\eqref{eq: AreaSpecEffic} and those of the emulator.
Increasing the number of links of the network does not affect ASE of TDMA. The average network throughput of TDMA is slightly lower than one packet per slot, and it achieves the upper bound if the obstacle density goes to zero, see \cite[Corollary~2]{shokri2015collision}. Slotted ALOHA with transmission probability $\rho_a = 1$ provides the highest ASE, which is firstly increasing with the link density but then shows a strictly decreasing behavior once throughput loss due to the collision term overweighs the throughput enhancement due to the first term of~\eqref{eq: AreaSpecEffic}. For the example of $\rho_a = 1$ and $\theta = 10 \degree$, the optimal density of transmitters that maximizes ASE is, on average, 3.5 transmitters per square meter. This example number indeed implies that mmWave networks benefit from dense deployment. Slotted ALOHA with $\rho_a = 0.1$ outperforms that with $\rho_a=1$ in ultra dense WPANs ($\lambda_t>9$ in Fig.~\ref{subfig: ASE_LinkDensity}), as lower transmission probability leads to fewer active links in such networks.

\section{Conclusion}\label{sec: Conclusion}
Pencil-beam operation in millimeter wave (mmWave) networks reduces multiuser interference, which may lead to noise-limited mmWave networks.
In this paper, tractable closed-form expressions for collision probability, per-link throughput, and area spectral efficiency in a mmWave ad~hoc network with slotted ALOHA and TDMA schedulers were derived.
The analysis indicated that mmWave networks may not be necessarily noise-limited; rather, they show a transitional behavior from a noise-limited regime to an interference-limited one. This transitional behavior of interference necessitates new hybrid resource allocation procedures that consist of both contention-based and contention-free phases with adaptive phase duration. The duration of each phase depends on the actual network operating regime. The contention-based phase improves throughput performance, while the contention-free phase is still necessary to guarantee collision-free communications.

\bibliographystyle{IEEEtran}
\bibliography{References}

\end{document}